\documentclass{article}

     \PassOptionsToPackage{numbers, compress}{natbib}
  \usepackage[preprint]{neurips_2026}

\usepackage[utf8]{inputenc} 
\usepackage[T1]{fontenc}    
\usepackage{hyperref}       
\usepackage{url}            
\usepackage{booktabs}       
\usepackage{amsfonts}       
\usepackage{nicefrac}       
\usepackage{microtype}      
\usepackage[table]{xcolor}         
\usepackage{amsmath,graphicx}
\usepackage{hyperref}
\usepackage{amsmath}
\usepackage{times}
\usepackage{subcaption}
\usepackage{bm}
\usepackage{amsthm}
\usepackage{algorithm}
\usepackage{algpseudocode}
\usepackage{float}
\usepackage{amssymb,amsfonts,graphicx,float,multirow,enumitem, url,booktabs}
\usepackage{bbm}
\usepackage{comment}
\usepackage{wrapfig}
\usepackage{placeins}
\usepackage{graphicx,caption} 
\usepackage{blindtext} 
\usepackage{adjustbox} 

\newtheorem{theorem}{Theorem}

\newtheorem{assumption}{Assumption}
\newtheorem{lemma}{Lemma}
\newtheorem{remark}{Remark}

\def\E{{\textsf{E}}}
\def\Var{{\textsf{var}}}
\def\Cov{{\textsf{cov}}}

\allowdisplaybreaks

\usepackage[normalem]{ulem}

\newcommand{\pr}{\textsf{pr}} 
\newcommand{\ep}{\textsf{E}} 
\newcommand{\var}{\textsf{var}} 
\newcommand{\cov}{\textsf{cov}} 
\newcommand{\bad}[1]{\cellcolor{red!12}\text{#1}}

\graphicspath{{./../images/Performance Comparison/curves_3/}}

\title{KAP-CPD: Kernel Aggregation for Change-Point Detection in Dynamic Networks}

\author{%
  Mingxuan Sun  and Hao Chen 
  \\
  Department of Statistics \\
  University of California, Davis\\
  Davis, CA 95616 \\
  \texttt{mxsun@ucdavis.edu; hxchen@ucdavis.edu} \\
}

\begin{document}

\maketitle

\begin{abstract}
  Change-point detection in dynamic networks has received much attention due to its broad applications in social networks and biological systems. Kernel-based methods have shown strong potential for this problem. However, their performance can depend sensitively on the choice of kernel, and selecting an appropriate kernel is challenging when the underlying change pattern is unknown. Motivated by this challenge, we propose KAP-CPD, a new kernel-based testing framework for change-point detection in dynamic networks. KAP-CPD aggregates information from multiple kernels, allowing it to adapt to diverse change patterns. The proposed method does not assume specific underlying network distribution, and achieves strong empirical power across a wide range of network change scenarios. To improve scalability, we further develop a fast analytic testing procedure, KAPf-CPD, that substantially reduces computation time for long network sequences compared with permutation-based alternatives and current state-of-the-art methods. We evaluate our proposed framework through extensive simulations and real-world data on email communication networks and brain functional connectivity networks.

\end{abstract}

\section{Introduction}
    Network data are increasingly prevalent, as it is convenient to model the dynamics of interactions using network structure. For instance, social dynamic can be modeled with a network structure where the set of vertices denote individuals and the set of edges indicate pairwise interactions. In neuroscience, neural networks are represented as dynamic graphs with nodes as brain regions and edges as functional connections from fMRI data \cite{dynamic_graph_survey}. For example, epilepsy is a form of brain network disorder, and change-point detection methods that aim to identify physiological changes in these time-evolving epilepsy networks can contribute to epilepsy diagnosis, treatment and prognostics \cite{royer_epilepsy_2022, epilepsy_time-evolving_2024, NCPD}. In biology, it is also hypothesized that dynamic changes between genes are related to the clinical response, and we can model the changes using gene co-expression networks and apply change point detection methods to detect the changes \cite{gene_coexpression_cpd}. In these applications, change-point detection in dynamic graphs is essential for extracting insights from evolving modeled by networks of varying lengths from  dozens \cite{gene_coexpression_cpd,NCPD} to thousands \cite{seizure-detection}. In this paper, we consider the following offline change-point detection problem for dynamic networks: Given a sequence of independent observations $\{G_{t}\}_{t=1,\ldots,n}$, where each $G_{t}$ represents a snapshot of the network at a time point $t$, and each $G_{t}$ consists of a set of vertices $V_t$ and a set of edges $E_t$. We consider testing the null hypothesis:
    \begin{equation} \label{eq:H0}
    	H_0: \, G_t \sim F_0, \, t = 1, \ldots, n 
    \end{equation} 
    against the single change-point alternative:
    \begin{equation}  
    \label{eq:H1}
    	H_1: \exists \, 1 \le \tau < n, \, \, G_t \sim 
    	\begin{cases}
    		F_0,  \,\,\,\,  t \le \tau \\
    		F_1,  \,\,\,\, \text{otherwise}
    	\end{cases}
    \end{equation}
    where $F_{0}$ and $F_{1}$ are two different distributions.

\subsection{Related Work}
    \textbf{Parametric Models.}
        Several parametric approaches have been developed specifically for network-valued data. These methods typically assume a particular network-generating mechanism, such as the stochastic block model (SBM) \cite{JMLR:SBM}, more general Bernoulli random graph models \cite{wang2020optimalchangepointdetection}, preferential attachment models \cite{cpd_preferential_attachment,preferential-attachment}, or separable temporal exponential random graph models (STERGMs) \cite{kei2025cpdseparable}. Some of these works establish theoretical guarantees, including asymptotic consistency and minimax optimality \cite{wang2020optimalchangepointdetection}. However, parametric approaches often rely on restrictive model assumptions that may be violated in real applications, which can substantially degrade their empirical performance. 
        
    \textbf{Nonparametric Models.}
        There are some methods based on matrix factorization and utilize spectral information on the adjacency or the Laplacian matrices of networks. For example, in \cite{LAD}, the authors use singular value decomposition of the graph Laplacian as the low-dimensional representation. In \cite{NCPD} the authors use spectral clustering on the Laplacian matrix. There are other nonparametric models that are not designed particularly for dynamic networks but can be generally applied to network-structured data. For instance, \cite{kerseg} proposed a framework with single kernel based statistic. There is also Fréchet statistics based method \cite{frechet-cpd}, rank-based method \cite{rank-based-cp} and graph-based methods \cite{chen2015graph,chu-and-chen19,song-chen-2022}.  
        
    \textbf{Deep Learning.}
     Authors in \cite{graph_similarity_learning} adopts a graph neural network architecture to learn a data-driven graph similarity function from a training subsequence. However, reliance on training data is perhaps unrealistic in real-world change-point detection since problems are typically unsupervised. Unsupervised deep learning approaches have been proposed to reduce this dependence. For instance, authors in \cite{deep-learning-NCPD-GMM} proposes VGGM, which jointly trains a variational graph autoencoder (VGAE) and a Gaussian mixture model (GMM), while authors in \cite{kei2025-decoder} detects change points in low-dimensional latent representations using a decoder-only architecture. Nevertheless, these methods may still depend on assumptions about the underlying network distribution or latent representation, and their detection thresholds can be difficult to calibrate in practice.
        
\subsection{Motivation: Difficulty of Kernel Selection}
    Kernels are widely used in two-sample testing \cite{JMLR:gretton-MMD}, and kernel-based change-point detection methods have been shown to perform well across a broad range of alternatives and data types \cite{JMLR:kernel-change-point}. A key challenge, however, is that kernel methods require selecting a kernel \emph{a priori}. Different kernels have strengths and weaknesses in different scenarios, a poorly chosen kernel can lead to substantial power loss. This motivates the development of kernel-combination strategies that can leverage complementary information from multiple kernels. Kernel combination has been studied in two-sample testing \cite{2023Boosting-kernel-two-sample-test,2023mmdfuse,zhou2025dual-kernel-combination,JMLR:combining-kernels}. Motivated by these developments, we investigate kernel combination strategies in change-point detection in dynamic network setting. In practice, Gaussian RBF kernels are commonly used. However, for network-valued data, Gaussian kernels require vectorizing graphs into a Euclidean space. While convenient, such representations may fail to capture higher-order network structure, such as transitivity which are often central to network dynamics \cite{network_representation_learning_survey}. As a result, changes that primarily affect higher-order dependencies may be difficult to detect using Gaussian kernels alone. In contrast, graph kernels operate directly on graph objects and can capture structural information such as subgraphs \cite{pmlr-v5-graphlet-kernel}, paths \cite{random-walk-kernel,graph-kernels-general-framework} and isomorphism \cite{WLkernel}. Combining diverse kernels therefore enables us to obtain the best of both worlds. Proper aggregation of these kernels should preserve the strengths of each as much as possible while making up their individual limitations, thereby improving detection power across a broader class of change-point alternatives.

    \subsection{Type I Error Control and Inference}
    In addition to kernel selection, inference for network change-point detection poses nontrivial challenges. Many existing methods establishes type-I error control through a model-based data-driven threshold \cite{wang2020optimalchangepointdetection,kei2025-decoder,kei2025cpdseparable}. However, when the underlying data distribution deviates from model assumptions, such procedures often struggle to reliably control type-I error. (See ~\ref{experiments: empirical-size}). Moreover, the detection threshold can also be computationally expensive to tune. Permutation-based tests provide a flexible alternative. By approximating the null distribution through data-driven resampling, permutation procedures provides natural and reliable type-I error control in finite-sample settings under exchangeability assumption. When combined with kernel aggregation, permutation-based inference further enhances the practical applicability of the proposed methodology by enabling valid hypothesis testing without relying on restrictive model assumptions. 


\section{New Test Procedure: KAP-CPD} \label{sec:new}
    \subsection{Definitions}\label{subsec:definition}
    Throughout the paper, we work under the permutation null distribution, which assigns probability $1/n!$ to each of the $n!$ permutations of the network sequence $\{G_t\}_{t=1}^n$. We use $\pr$, $\ep$, $\var$, and $\cov$ to denote probability, expectation, variance, and covariance under permutation null distribution. For an arbitrary quantity $T$, throughout the paper we let $Z_T$ denote the standardized version of $T$, where standardization is given by:$ Z_T=\frac{T-\ep(T)}{\sqrt{\Var(T)}}$.

    Let $K_1$ and $K_2$ be two kernel matrices computed from the network sequence $\{G_t\}_{t=1}^n$. Generally, the proposed framework can accommodate arbitrary user-selected kernels $K_1$ and $K_2$. Let $k_{xij}$ denote the $(i,j)$th entry of $K_x$ for $x \in \{1,2\}$, and let $\mathbbm{1}(\cdot)$ denote the indicator function.
        {\small
    \begin{equation*}
    \alpha_x(t)
    = \frac{1}{t(t-1)}
       \sum_{i=1}\sum_{j\ne i}
       k_{xij}\,\mathbbm{1}\{i,j\le t\}, \quad 
    \beta_x(t)
    = \frac{1}{(n-t)(n-t-1)}
       \sum_{i=1}\sum_{j\ne i}
       k_{xij}\,\mathbbm{1}\{i,j>t\}.
    \end{equation*}
    }
    Let $\vec a (t) =[\alpha_1(t), \beta_1(t),\alpha_2(t),\beta_2(t)]^T$, $\Sigma(t) = \var(\vec a (t))$:
    \begin{equation}
    \text{S}(t) =  
        [\vec a (t)- \E\vec a (t)]^{\mathsf{T}} 
             \Sigma(t)^{-1} 
             [\vec a (t)-\E\vec a (t)].
    \end{equation}\label{statistic-eq}
    Each element of $\Sigma(t)$ can be calculated and the specific expressions are given in Lemma \ref{lemma:first-two-moments}.
    \begin{remark}
    We note that \text{S}(t) can also incorporate more than 2 kernels by setting $$\vec a (t) =[\alpha_1(t),\beta_1(t)\dots \alpha_m(t),\beta_m(t)]^T.$$ In this paper, we focus on the case with two kernels as they already provide an effective combination that covers a broad range of alternatives as seen in our numerical studies in Section ~\ref{sec:performance}. 
    \end{remark}
    
    To test $H_0$ (\ref{eq:H0}) vs. $H_1$ (\ref{eq:H1}), we use the scan statistic:
    \begin{equation}\label{scan_stat}
        S^*=\max_{n_0\le t \le n_1} \text{S}(t)
    \end{equation}
    where $n_0$ and $n_1$ are pre-specified cut-offs for potential change-point location, the default setting is $n_0=0.05n$, $n_1=0.95n$. We reject $H_0$ when     Eq~\ref{scan_stat} is larger than the critical value for
    a given significance level. The estimated location of the change-point $\tau$ is given by: $\hat \tau = \arg\max_{n_0\le t \le n_1} \text{S}(t).\label{stat:tauhat}$
   

    \subsection{Computation Cost Reduction: Decomposition of S(t)}
        To further analyze the properties of the new statistic $\text{S}(t)$, we show that it could be decomposed into 4 separate components consists of the following fundamental quantities. Let $x\in \{1,2\}$. Define
        $$
            W_x(t) =\frac{t}{n}\alpha_x(t)+\frac{n-t}{n}\beta_x(t), \quad D_x(t) = \frac{t(t-1)}{n(n-1)}\alpha_x(t)-\frac{(n-t)(n-t-1)}{n(n-1)}\beta_x(t).\\
        $$     
        \begin{theorem}\label{thm:decomposition}
            $\text{S}(t)$ can be decomposed as follows:
            $$\text{S}(t)= Z_{W_{\text{Diff}}}^2(t)+Z_{D_{\text{Diff}}}^2(t)+Z_{W_{\text{Sum}}}^2(t)+Z_{D_{\text{Sum}}}^2(t).$$
            Where $Z_{W_{\text{Diff}}}(t)$, $Z_{D_{\text{Diff}}}(t)$, $Z_{W_{\text{Sum}}}(t),$ $Z_{D_{\text{Sum}}}(t)$ 
            are uncorrelated, and are the corresponding standardized version of:
            \[
                W_{\text{Diff}}(t) = W_1(t) - W_2(t), \quad D_{\text{Diff}}(t) = D_1(t) - D_2(t),
            \]
            \[
                W_{\text{Sum}}(t) = c_1(t)W_1(t) + W_2(t), \quad D_{\text{Sum}}(t) = c_2(t) D_1(t) + D_2(t).
            \]
            where $c_1(t)$ and $c_2(t)$ are two weights depending on values of $K_1$ and $K_2$ with detailed form given in Appendix $\ref{appendix:proof_thm1}$.
        \end{theorem}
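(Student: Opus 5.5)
The plan is to treat $\text{S}(t)$ as the squared Mahalanobis norm of the centered vector $\vec a(t)-\E\vec a(t)$. Such a norm is invariant under any invertible linear change of coordinates. Concretely, if $B(t)$ is a $4\times 4$ invertible matrix and $\vec b(t)=B(t)\vec a(t)$, then
\[
[\vec a-\E\vec a]^{\mathsf T}\Sigma^{-1}[\vec a-\E\vec a]=[\vec b-\E\vec b]^{\mathsf T}\bigl(B\Sigma B^{\mathsf T}\bigr)^{-1}[\vec b-\E\vec b].
\]
If we can choose $B(t)$ so that the four components of $\vec b(t)$ are $W_{\text{Diff}},D_{\text{Diff}},W_{\text{Sum}},D_{\text{Sum}}$ and $B\Sigma B^{\mathsf T}$ is diagonal, the right-hand side is exactly the sum of the four squared standardized components. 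So the proof reduces to two tasks: exhibiting $B(t)$, and checking that the four quantities are pairwise uncorrelated under the permutation null.

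\textbf{Step 1: the linear map.} $W_x$ and $D_x$ are linear combinations of $(\alpha_x,\beta_x)$ with coefficient matrix
\[
\begin{pmatrix} t/n & (n-t)/n \\ \frac{t(t-1)}{n(n-1)} & -\frac{(n-t)(n-t-1)}{n(n-1)}\end{pmatrix},
\]
which has nonzero determinant for $2\le t\le n-2$. Applying this block-diagonally to both kernels and then taking the differences and the $c$-weighted sums gives an invertible $B(t)$, provided $c_1\neq -1$ and $c_2\neq -1$.

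\textbf{Step 2: the covariance structure.} Using the moment formulas of Lemma~\ref{lemma:first-two-moments}, I expect the known single-kernel fact (as in \cite{kerseg}) to extend to pairs of kernels: $\cov(W_x,D_y)=0$ for all $x,y\in\{1,2\}$. The reason is that $W$ and $D$ are the projections onto the "symmetric" and "antisymmetric" combinations of within-group sums. Their permutation covariances factor into a $t$-dependent coefficient that vanishes, multiplied by kernel-dependent sums such as $\sum_{i\ne j}k_{1ij}k_{2ij}$ and $\sum_i k_{1i\cdot}k_{2i\cdot}$. This gives uncorrelatedness between every $W$-type combination and every $D$-type combination. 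Within the $W$-block, choose $c_1$ so that $\cov(W_1-W_2,\,c_1W_1+W_2)=0$:
\[
c_1\var(W_1)-\var(W_2)+(1-c_1)\cov(W_1,W_2)=0 \;\Rightarrow\; c_1=\frac{\var(W_2)-\cov(W_1,W_2)}{\var(W_1)-\cov(W_1,W_2)}.
\]
The analogous formula with $D$ in place of $W$ defines $c_2$. One should check that these variance and covariance ratios do not depend on $t$, or else allow $c_i(t)$ as the statement does.

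\textbf{Step 3: conclusion.} $B\Sigma B^{\mathsf T}$ is then diagonal, with entries equal to the four variances, and the identity from the opening paragraph yields the decomposition.

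\textbf{Main obstacle.} The main obstacle is Step 2, namely verifying $\cov(W_x,D_y)=0$ for cross-kernel pairs $x\neq y$. This requires writing out the permutation covariance of $\alpha_x(t)$ and $\beta_y(t)$ in terms of the three types of index overlap: both indices shared, one shared, and none shared. Using the counts of pairs within and across groups, one then checks that the $W$–$D$ combination cancels each coefficient. I would first verify this for $x=y$, where it is the known result, and then note that the computation only uses bilinearity in $(k_x,k_y)$, so it carries over verbatim. A secondary issue is guaranteeing $c_i\ne -1$ and nondegeneracy of $\Sigma(t)$, which I would state as assumptions.
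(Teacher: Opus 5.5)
Your proposal is correct and follows the paper's own proof. Like you, the paper asserts $\cov(W_x(t),D_y(t))=0$ for all $x,y\in\{1,2\}$ from Lemma~\ref{lemma:first-two-moments}, chooses $c_1(t),c_2(t)$ by exactly your formulas, and rewrites $S(t)$ in the coordinates $V_t\vec a(t)$ with $V_t=A_tB_t$, so that $V_t\Sigma(t)V_t^{\mathsf T}$ is diagonal. Your extra check that this change of coordinates is invertible makes explicit a step the paper uses silently; the condition $c_i(t)\neq -1$ is in fact automatic when $\Sigma(t)$ is nonsingular, since $c_1(t)=-1$ exactly when $\var(W_1(t)-W_2(t))=0$.
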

        As shown in Theorem~\ref{thm:decomposition}, the quantities $W_1(t)$, $W_2(t)$, $D_1(t)$, and $D_2(t)$ serve as the fundamental components of the test statistic $\text{S}(t)$. Consequently, both the construction and the existence of $\text{S}(t)$ rely critically on these underlying terms. From a computational perspective, the decomposition in Theorem~\ref{thm:decomposition} also leads to a significant reduction in computational cost, as it eliminates the need to invert a $4 \times 4$ matrix at every time step $t$. 
        
        \begin{theorem}\label{thm2}
            For $2 \le t \le n-2$, $S(t)$ is well-defined except when either $\Var(W_1(t))\Var(W_2(t))=\Cov(W_1(t),W_2(t))^2 \text{ or } \Var(D_1(t))\Var(D_2(t))=\Cov(D_1(t),D_2(t))^2$ , or when $\Var(D_1(t)+D_2(t))=0$ or $\var(W_1(t)+W_2(t))=0$.
        \end{theorem}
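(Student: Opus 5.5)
Our plan is to reduce well-definedness of $\text{S}(t)$ to invertibility of $\Sigma(t)$, and then read off invertibility from the block structure behind Theorem~\ref{thm:decomposition}.

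\textbf{Step 1: change of variables.} For $2\le t\le n-2$, both $t(t-1)$ and $(n-t)(n-t-1)$ are positive. Hence the map $(\alpha_x(t),\beta_x(t))\mapsto (W_x(t),D_x(t))$ is a linear bijection. Its determinant is proportional to
\[
-\frac{t}{n}\cdot\frac{(n-t)(n-t-1)}{n(n-1)}-\frac{n-t}{n}\cdot\frac{t(t-1)}{n(n-1)}\neq 0 .
\]
So there is an invertible $4\times 4$ matrix $A(t)$ with $\vec b(t):=A(t)\vec a(t)=[W_1,W_2,D_1,D_2]^T$. Consequently $\Sigma(t)$ is invertible if and only if $\var(\vec b(t))=A\Sigma A^T$ is invertible. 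The quadratic form is invariant under this transformation, so $\text{S}(t)$ is well defined exactly when $\var(\vec b(t))$ is nonsingular.

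\textbf{Step 2: block diagonality.} Using the permutation-null moment formulas of Lemma~\ref{lemma:first-two-moments}, we show that $\cov(W_x(t),D_y(t))=0$ for all $x,y\in\{1,2\}$. This is the standard fact, from the kernel/graph-based scan literature, that the weighted within-sample average $W$ and the difference $D$ are uncorrelated under permutation; it is also what underlies the uncorrelatedness claimed in Theorem~\ref{thm:decomposition}. Then $\var(\vec b(t))=\mathrm{diag}(\Sigma_W,\Sigma_D)$, where $\Sigma_W$ and $\Sigma_D$ are the $2\times2$ covariance matrices of $(W_1,W_2)$ and of $(D_1,D_2)$. This block matrix is invertible if and only if
\[
\det\Sigma_W=\var(W_1)\var(W_2)-\cov(W_1,W_2)^2\neq0
\quad\text{and}\quad
\det\Sigma_D\neq 0,
\]
which are the first two exceptional conditions in the statement.

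\textbf{Step 3: the decomposition itself.} The remaining exceptions come from Theorem~\ref{thm:decomposition}. There $c_1(t)$ is the weight that makes $W_{\text{Sum}}$ uncorrelated with $W_{\text{Diff}}$, namely
\[
c_1=\frac{\var(W_2)-\cov(W_1,W_2)}{\var(W_1)-\cov(W_1,W_2)},
\]
and similarly for $c_2$ with the $D$'s. These weights and the standardizations $Z_{W_{\text{Sum}}}$, $Z_{W_{\text{Diff}}}$ require nonzero denominators. We will trace, via the explicit form in Appendix~\ref{appendix:proof_thm1}, that the degeneracies not already covered by $\det\Sigma_W\ne0$ reduce to $\var(W_1+W_2)=0$, and analogously $\var(D_1+D_2)=0$. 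This uses the identity $\var(W_1)+\var(W_2)-2\cov = \var(W_1-W_2)$ together with the relation between the sum and difference directions. Thus the listed conditions exhaust the cases where either $\Sigma(t)^{-1}$ or the decomposition fails.

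\textbf{Main obstacle.} The hardest step is Step 2: verifying the vanishing cross-covariances $\cov(W_x,D_y)$ from the permutation moment formulas. It involves bookkeeping of the pair/triple/quadruple index sums. We would first check it by writing $W$ and $D$ as linear combinations of $\sum_{i\ne j}k_{ij}\mathbbm 1\{i,j\le t\}$ and $\sum_{i\ne j}k_{ij}\mathbbm 1\{i,j>t\}$, and exploiting the symmetric structure of their null covariance.
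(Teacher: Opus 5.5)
Your Steps 1--2 are correct and, together with one remark below, prove the theorem by a cleaner route than the paper's. The paper works with the sum/difference vector $(W_1-W_2,\,D_1-D_2,\,W_1+W_2,\,D_1+D_2)$. Its covariance is not block diagonal: the $W$-block has off-diagonal entry $\Var(W_1)-\Var(W_2)$. The paper therefore takes a Schur complement with respect to $\mathrm{diag}(\Var(W_1+W_2),\Var(D_1+D_2))$, so the conditions $\Var(W_1+W_2)\neq 0$ and $\Var(D_1+D_2)\neq0$ enter only to make that complement exist. The identity $\Var(W_1-W_2)\Var(W_1+W_2)-(\Var(W_1)-\Var(W_2))^2=4[\Var(W_1)\Var(W_2)-\Cov(W_1,W_2)^2]$ then produces the determinant conditions. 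Your basis $(W_1,W_2,D_1,D_2)$ gives $\mathrm{diag}(\Sigma_W,\Sigma_D)$ at once, and with it the sharper statement that $S(t)$ is well defined iff $\det\Sigma_W\neq0$ and $\det\Sigma_D\neq0$. Both arguments rest on $\Cov(W_x,D_y)=0$, which the paper also only asserts. The literature fact you cite covers only $x=y$. For $x\neq y$, use Lemma~\ref{lemma:first-two-moments}: every covariance entering $\Cov(W_a,D_b)$ depends on $(a,b)$ only through the symmetric quantities $A_{ab},B_{ab},C_{ab},\bar k_a\bar k_b$, so $\Cov(W_a,D_b)=\Cov(W_b,D_a)$. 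Since $W$ and $D$ are linear in the kernel, polarizing with $K_a+K_b$ then gives $2\Cov(W_a,D_b)=0$.

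Step 3 should be deleted: it is unnecessary, and its central claim is false. The conditions $\Var(W_1+W_2)=0$ and $\Var(D_1+D_2)=0$ are not extra exceptions coming from Theorem~\ref{thm:decomposition}; they are redundant. Since $\Sigma_W$ is positive semidefinite, $(1,1)\Sigma_W(1,1)^{\mathsf{T}}=\Var(W_1+W_2)=0$ forces $\Sigma_W(1,1)^{\mathsf{T}}=0$, hence $\det\Sigma_W=0$, and likewise for $D$. That one line is what matches your characterization to the list in the statement, in both directions.

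Your assertion that failures of the decomposition not covered by $\det\Sigma_W\neq0$ reduce to $\Var(W_1+W_2)=0$ is wrong. The denominator of $c_1(t)$ vanishes whenever $\Var(W_1)=\Cov(W_1,W_2)$, which happens e.g.\ when $K_2=K_1+K_3$ with $\Cov(W_1,W_3)=0$. A concrete instance is $\Var(W_1)=\Cov(W_1,W_2)=1$, $\Var(W_2)=4$, where $\det\Sigma_W=3$ and $\Var(W_1+W_2)=7$. There $S(t)$ is perfectly well defined; only the normalization $c_0=1$ in Theorem~\ref{thm:decomposition} breaks, since the uncorrelated companion of $W_1-W_2$ is then $W_1$ itself. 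So the listed conditions do not exhaust the failures of that parametrization. This is harmless here, because the present theorem concerns only the existence of $\Sigma(t)^{-1}$.
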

        \begin{remark}
         These conditions pertain to the invertibility of $\Sigma(t)$. In practice, combining kernels that are perfectly linearly correlated (or highly correlated) would be undesirable as they contain redundant information.
        \end{remark}

    \subsection{Overall Testing Procedure for KAP-CPD}

    Given the scan statistic $\text{S}(t)$, we test for the presence of a change point by evaluating the tail probability of the scan statistic \ref{scan_stat} under the null hypothesis $H_0$  \eqref{eq:H0}:
    \begin{equation} 
    \label{tail:single}
        \pr \left(
        \max_{n_0 \le t \le n_1} S(t) > b
        \right).
    \end{equation}
    
    To estimate \eqref{tail:single}, we use a permutation procedure which is valid under the  exchangeability condition:
    \begin{assumption}
    \label{assumption:exchangeability}
    Under $H_0$, the network sequence $\{G_t\}_{t=1}^n$ is exchangeable. 
    \end{assumption}
    In particular, this holds if $G_1,\ldots,G_n$ are independent and identically distributed under $H_0$.

    \begin{algorithm}[H]
        \caption{Permutation Testing Procedure for KAP-CPD}
        \label{alg:fusion_cpd}
        \begin{algorithmic}[1]
        \Require Kernel Matrices $K_1,K_2$ computed from $\mathcal{X}_n=\{G_t\}_{t=1}^n$, number of permutations $B$, $n_0,n_1$.
        \State Compute the observed scan statistic $S^* = \max_{n_0\le t\le n_1} S(t).$

        \For{$b=1,\ldots,B$}
            \State Generate a random permutation $\pi_b$ of $\{1,\ldots,n\}$.
            \State Permute the network sequence:
            $
            \mathcal{X}_n^{\pi_b}
            =
            \{G_{\pi_b(1)},\ldots,G_{\pi_b(n)}\}.
            $
            \State Compute the permuted scan statistic
            $
            S_b^\pi = \max_{n_0\le t\le n_1} S_b^\pi(t).
            $
        \EndFor
        
        \State Compute the permutation $p$-value:
        $
        p_{\text{perm}}
        =
        \frac{1}{B}
        \sum_{b=1}^B
        \mathbbm{1}\left\{S_b^\pi \ge S^*\right\}.
        $
        
        \State \Return $p_{\text{perm}}$, $\hat \tau$.
        \end{algorithmic}
        \end{algorithm}

\section{Analytical p-value approximations and fast test} \label{sec:theory}
    While the $p$-value in \eqref{tail:single} can be computed using permutation methods, such procedures are computationally expensive. In this section, we study the asymptotic behavior of the fundamental components. By leveraging their asymptotic tail probabilities, we can obtain analytical approximations to the corresponding 
    $p$-values, resulting in a fast testing procedure that is well suited for the initial screening of potential change-points. We first mention that $W_x(t)$ is equivalent to MMD$(t)$ up to a constant \cite{Generalized_ker_two_sample}, and therefore it is degenerate and does not converge to gaussian under the null hypothesis \cite{JMLR:gretton-MMD}. Following the procedure in \cite{Generalized_ker_two_sample}, for the asymptotic distribution, we instead work on a $r$-weighted version of $W_x(t)$, where $r$ is a constant and $x \in \{1,2\}$:
    $
        W_{x,r} (t) = r \frac{t}{n} \alpha_x(t) + \frac{n-t}{n} \beta_x (t).
    $

    \subsection{Fast test: KAPf-CPD}
\label{subsec:fast}

    We develop a fast version of KAP-CPD, denoted KAPf-CPD, by using a separate Bonferroni-correction procedure to provide analytical $p$-value in place of permutation $p$-value. The procedure uses the limiting distributions of the standardized processes
    $Z_{D_x}(t)$ and $Z_{W_{x,r}}(t)$, discussed in Section~\ref{fast-test:limiting-dist-p} and Appendix~\ref{appendix:limiting-dist-details}. Define $Z_{D_x}^*=\max_{n_0\le t\le n_1} |Z_{D_x}(t)|$, $Z_{W_{r,x}}^*=\max_{n_0\le t\le n_1} Z_{W_{r,x}}(t).$
    We compute the approximate Bonferroni-adjusted $p$-value:
    \begin{equation}
    \label{eq:fast-test}
    p_{\mathrm{approx}}
    =
    \min\left\{
    1,\,
    6p_{Z^*_{D_1}},
    6p_{Z^*_{D_2}},
    6p_{Z^*_{W_{1,r_1}}},
    6p_{Z^*_{W_{1,r_2}}},
    6p_{Z^*_{W_{2,r_1}}},
    6p_{Z^*_{W_{2,r_2}}}
    \right\},
    \end{equation}
    where each term denotes the corresponding $p$-value approximations. KAPf-CPD rejects $H_0$ when $p_{\mathrm{approx}}$ is below the nominal significance level. If $H_0$ is rejected, the change-point location is estimated with $\arg\max_{n_0 \le t\le n_1} \text{S}(t)$ in \eqref{stat:tauhat}. Thus, \eqref{eq:fast-test} provides a computationally efficient alternative to the permutation $p$-value $p_{\mathrm{perm}}$ in Algorithm~\ref{alg:fusion_cpd}.

    \subsection{Asymptotic p-value approximations} \label{fast-test:limiting-dist-p}
        KAPf-CPD relies on the asymptotic behavior of the standardized processes:
        \begin{equation}\label{eq:W-D-processes}
        \{Z_{D_x}(\lfloor nu \rfloor): 0<u<1\}
        \quad \text{and} \quad
        \{Z_{W_{x,r}}(\lfloor nu \rfloor): 0<u<1\},
        \end{equation}
        where $Z_{D_x}$ and $Z_{W_{x,r}}$ are the standardized versions of $D_x$ and $W_{x,r}$, respectively, with standardization as in ~\ref{subsec:definition}. Theorem 1 in \cite{kerseg} states that under regularity conditions on individual kernel matrices described in Appendix ~\ref{appendix:limiting-dist-details}, processes in ~\ref{eq:W-D-processes} converge to a Gaussian process in finite dimensional distributions for $r \neq 1$. 
    
        Since $Z_{D_{1}}(t), Z_{D_{2}}(t), Z_{W_{1},r}(t), Z_{W_{2},r}(t)$ are asymptotically gaussian, we may follow similar arguments in the proof for Proposition 3.4 in \cite{chen2015graph} for approximating the tail probabilities for maximum of Gaussian process to obtain the quantities in ~\ref{eq:fast-test}. Let $b$ be any given value:
        \begin{align}
        p_{Z^*_{D_x}}=\pr(Z^*_{D_x} > b)
        & = \pr(\max_{n_0\le t\le n_1} |Z_{D_{x}}(t)|> b),\label{eq:D-approx}\\
        p_{Z^*_{W_{x,r}}}=\pr(Z^*_{W_{x,r}} > b)&= \pr (\max_{n_0\le t\le n_1} Z_{W_{x,r}}(t)>b). \label{eq:W-approx}
        \end{align}
        The quantities in ~\ref{eq:D-approx} and ~\ref{eq:W-approx} can be obtained following similar procedure as in \cite{chen2015graph}. The details of the specific calculations are given in Appendix ~\ref{appendix:p-val-approx}.


\section{Experiments} \label{sec:performance}
    \textbf{Metrics.} We evaluate performance using \emph{Accurate Detection}. Under the alternative, where the true change point is $\tau$, a run is counted as accurate detection if a change point is detected and the estimated location $\hat{\tau}$ satisfies $|\hat{\tau}-\tau|\le 0.05n$. For methods that produce $p$-values, detection is declared when the $p$-value $\le 0.05$; for CPDstergm \citep{kei2025cpdseparable} and NBS \citep{wang2020optimalchangepointdetection}, detection is declared when the detected change-point is not null. We report the number of accurate detections over 100 runs. Under $H_0$, where no change point is present, this metric reduces to the number of false detections.

    In our experiments, we choose $K_1$ and $K_2$ to be Gaussian kernel with median heuristic \cite{JMLR:gretton-MMD} and graphlet kernel \cite{pmlr-v5-graphlet-kernel}. We choose these two kernels because they are diverse kernels that have different strengths. In practice, other kernels could also be applied. For permutation-based methods, we chose number of permutations B=1000. To assess the benefit of kernel aggregation, we compare the proposed KAP-CPD and KAPf-CPD methods with single-kernel baselines: GKCP (Gaussian) and GKCP (graph). We further compare with two existing methods for dynamic network change-point detection: NBS \cite{wang2020optimalchangepointdetection}, a CUSUM-based procedure assuming a Bernoulli network model, and CPDstergm \cite{kei2025cpdseparable}, a method based on separable temporal exponential random graph models (STERGM).
    
    \subsection{Power Comparison: Independence}
    We first consider settings in which edges are generated independently conditional on the model parameters. Let $N$ denote the number of nodes, $K$ the number of communities, $p$ the Erd\H{o}s--R\'enyi (ER) connection probability, and $\Lambda$ the SBM block connectivity matrix,$p_{\text{within}}$ and $p_{\text{across}}$ are the diagonal and off-diagonal entries of $\Lambda$. We consider two classes of models: ER/SBM and degree-corrected SBM (DCSBM).

\subsubsection{ER/SBM settings.}
We consider three independent-edge settings, all with sequence length $n=100$ and change point $\tau=50$.

    \textbf{ER:} $N=50$ and baseline $p=0.5$. After $\tau$, the connection probability increases by a signal level in $\{0,0.01,0.015,0.02,0.025,0.03,0.04,0.07\}.$
    
    \textbf{SBM:} $N=50$, $K=5$, $p_{\mathrm{within}}=0.5$, and $p_{\mathrm{across}}=0.3$. After $\tau$, $p_{\mathrm{within}}$ increases by
    $\{0,0.01,0.02,0.04,0.07,0.1,0.2\},$ corresponding to stronger within-community connectivity.
    
    \textbf{Sparse SBM:} $K=2$ balanced communities and $N$ ranges from $50$ to $200$. At $\tau$, the block matrix changes from
    $
    \Lambda_1=
    \begin{bmatrix}
    0.05 & 0.03\\
    0.03 & 0.05
    \end{bmatrix}
    \quad \text{to} \quad
    \Lambda_2=
    \begin{bmatrix}
    0.06 & 0.02\\
    0.02 & 0.06
    \end{bmatrix}.
    $
The results are shown in Figure~\ref{fig:simu:prob}.

\subsubsection{DCSBM settings.}
We next consider degree-corrected stochastic block models, where
$\Pr(A_{ij}=1)=\theta_i\theta_j\Lambda_{z_i z_j}.$
The degree parameters $\theta_i$ allow nodes in the same community to have different expected degrees, capturing hub-like behavior commonly observed in real networks. We consider three DCSBM scenarios, all with $n=100$, $\tau=50$, and $K=2$.

    \textbf{Degree-profile change:} $N=50$. After $\tau$, half of the nodes increase their degree parameters by the signal level, while the remaining half decrease by the same amount. The signal ranges over
    $\{0,0.1,0.12,0.15,0.17,0.18,0.2,0.25,0.3\}.$

    \textbf{Hub emergence:} $N=50$ and
    $
    \Lambda=
    \begin{bmatrix}
    0.05 & 0.03\\
    0.03 & 0.05
    \end{bmatrix}.
    $
    After $\tau$, a subset of nodes, indexed by the hub size, becomes hubs by increasing their degree parameter from $1$ to $1.3$. Hub size ranges from 0 to 16.
    
    \textbf{Block-probability change :} Node-specific degree parameters are generated as
    $
    \theta_i=y_i/(N^{-1}\sum_{j=1}^N y_j),
    \quad y_i\sim \mathrm{LogNormal}(0,0.25).
    $
    At $\tau$, the block matrix changes from
    $
    \Lambda_1=
    \begin{bmatrix}
    0.06 & 0.03\\
    0.03 & 0.06
    \end{bmatrix}
    \quad \text{to} \quad
    \Lambda_2=
    \begin{bmatrix}
    0.07 & 0.02\\
    0.02 & 0.07
    \end{bmatrix}.
    $ N ranges from 40 to 100.

    \begin{figure}[htbp]
        \centering
        \includegraphics[scale=0.52]{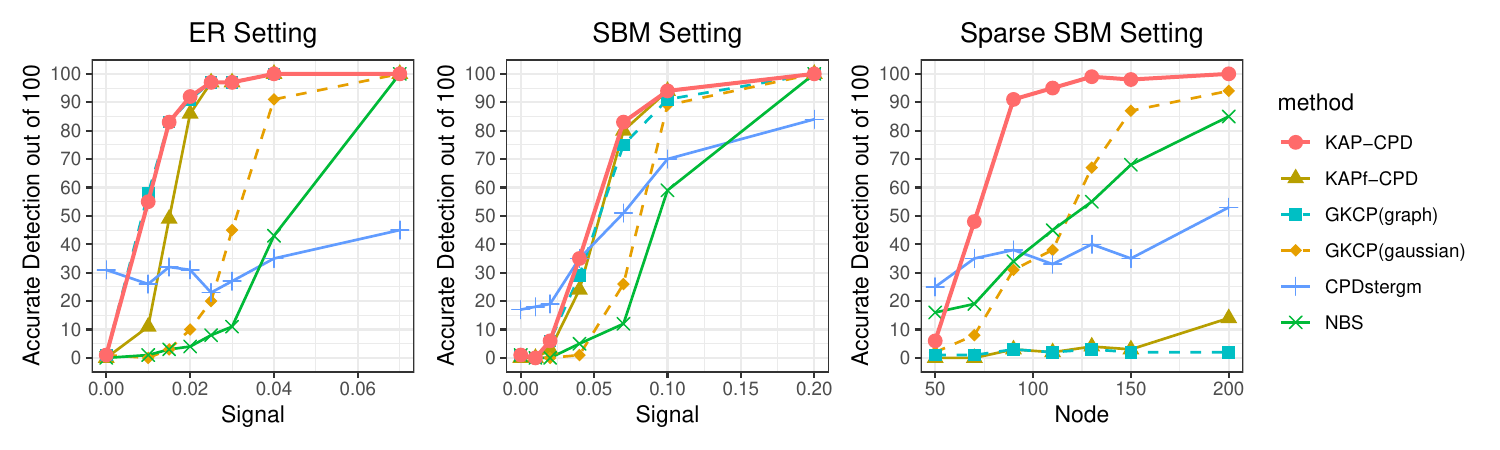}
        \caption{Accurate Detection in Probability Change Settings}
        \label{fig:simu:prob}
        \centering
        \vspace{10pt}
        \includegraphics[scale=0.52]{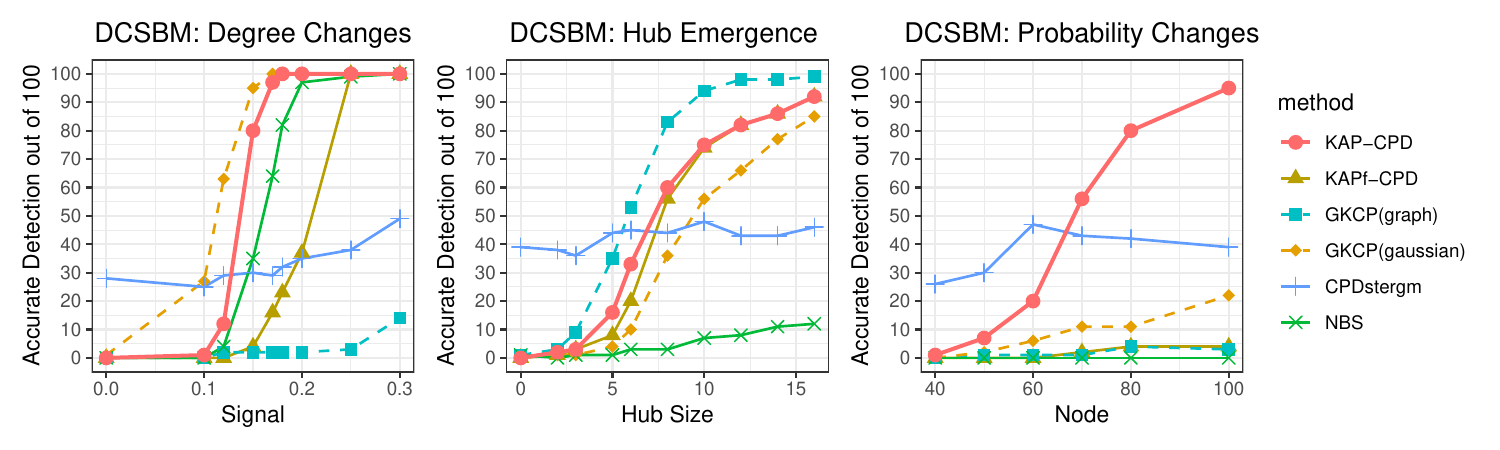}
        \caption{Accurate Detection in DCSBM Settings}
        \label{fig:simu:DCSBM}
    \end{figure}

    \subsection{Power Comparison: Dependence}
        We further examine settings with edge-level and temporal dependence. These forms of dependence are important in real-world networks, where transitivity or ``friends-of-friends'' effects are common: if two individuals are connected, their neighbors may also be more likely to form connections. Such settings violate the assumptions of NBS \cite{wang2020optimalchangepointdetection}, which relies on independence across both edges and observations. To assess robustness beyond independent Bernoulli network models, we consider Exponential random graph model (ERGM) and Random Geometric Graph (RGG) settings, which introduce edge-level dependence within each network, as well as STERGM settings, which introduce both edge-level dependence and temporal dependence.
        
            \textbf{ERGM Setting: Triangle Formation Change.} $n=100, N=50.$ Networks are generated from an ERGM with parameters $(\text{edge}=-2,\text{triangle}=0.1)$. After $\tau=50$, the triangle parameter increases by $[0.05,0.06,0.07,0.08,0.09,0.1]$. This models a change in higher-order dependence: nodes sharing a common neighbor become more likely to connect, leading to increased transitivity.
            
            \textbf{RGG Setting: Connection Radius Change.} $n=100, N=50.$ Networks are generated from an RGG with connection radius = $0.9\sqrt{\log(N)}/\pi N.$
            After $\tau=50$, the radius is multiplied by $\{1,1.02,1.04,1.06,1.08,1.11,1.12,1.14,1.2,1.26\}$. This models an expansion in the spatial range of interaction: nodes connect across larger distances, resulting in increased clustering.
            
            \textbf{STERGM Setting: Triangle Formation Change.} $n=100, N=50.$ Networks are generated from an STERGM with formation parameters $(\text{edge}=-1,\text{triangle}=-2)$ and persistence parameters $(\text{edge}=-1,\text{triangle}=-2)$. After $\tau=50$, the triangle parameter in persistence model increases by 
            `signal', and triangle parameter in formation increases by $3 \times \text{signal}$. signal ranges from 0.01 to 0.75. After the change-point, triangle patterns in networks become less discouraged. It also establishes temporal dependence across network observations: each network $G_t, t>1$ is dependent on $G_{t-1}$, which breaks Assumption ~\ref{assumption:exchangeability}.
        
        \begin{figure}[htbp]
            \centering
            \includegraphics[scale=0.52]{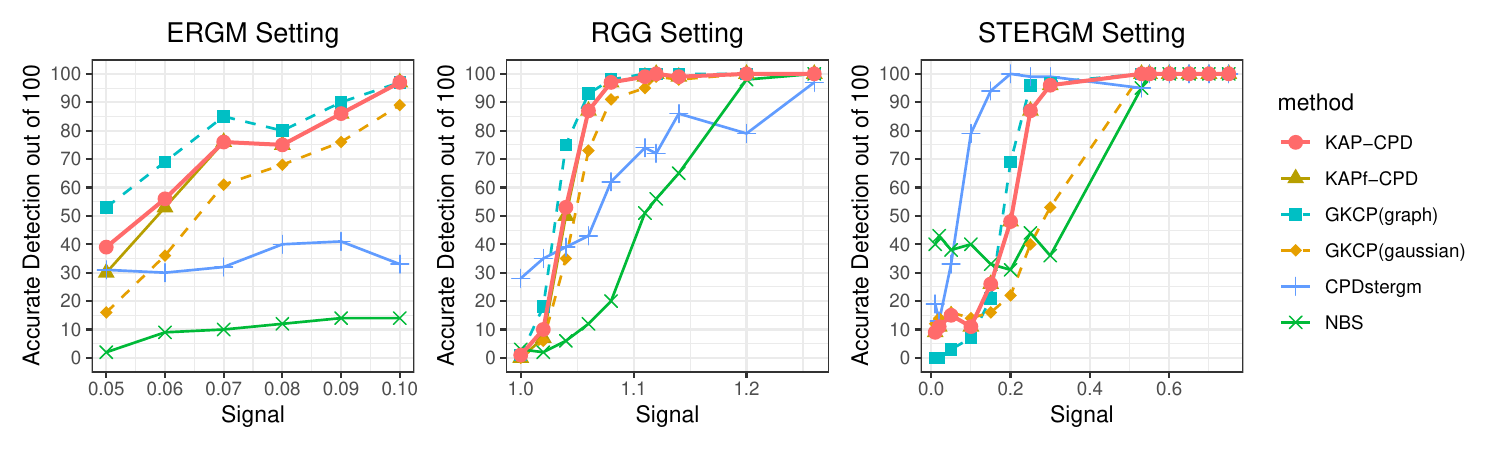}
            \caption{Accurate detection in settings with dyadic/observation dependence.}
            \label{fig:simu:dependence}
        \end{figure}
    We see that KAP-CPD combines kernels efficiently. The two dashed lines represent each of the single kernel baselines. We see that the respective single kernel baselines shows strong preference for certain settings: being top performers in some while having almost no power in others. KAP-CPD mitigates this issue by remaining close to the better performing kernel in their respective strengths. For settings such as ER, SBM, STERGM, KAP-CPD benefit from graphlet kernels and improves from single gaussian kernel baseline. For settings such as DCSBM degree changes and sparse SBM, KAP-CPD stays close to the better performing gaussian kernel baseline. It is worth noting that in sparse SBM and DCSBM probability changes setting, KAP-CPD shows much better improvement from both baselines. In the STERGM setting, CPDstergm is expected to perform well because it is based on the correctly specified parametric model; nevertheless, KAP-CPD achieves comparable performance in the medium signal ranges of $[0.25,0.75]$ without imposing this model assumption.
    
    \subsection{Empirical Size of Proposed Tests}\label{experiments: empirical-size}
    \begin{wraptable}{r}{0.56\textwidth}
        \vspace{-15pt}
        \footnotesize
        \caption{Type I error for various tests under $H_0$. n=100, significance level 0.05. Inflated type-I errors $>0.1$ are in red. }
        \label{tab:empirical_size}
        \centering
        \setlength{\tabcolsep}{3pt}
        \begin{tabular}{cccccc}
        \toprule
        \multicolumn{6}{c}{Empirical sizes over 1000 runs} 
        \\
        \cmidrule(r){1-6} 
         & KAP-CPD & KAPf-CPD & GKCP & NBS & CPDstergm\\
        \cmidrule(r){1-6} 
        ER  & 0.043 & 0.04 & 0.04 & \bad{0.233} & \bad{0.169} \\
        SBM  & 0.052 & 0.062 & 0.045 & 0.001 & 0.093 \\
        RGG & 0.051 & 0.04 & 0.051 & \bad{0.314} & \bad{0.291} \\
        ERGM & 0.055 & 0.035 & 0.052 & \bad{0.254} & \bad{0.237}\\
        \bottomrule
        \end{tabular}
     \end{wraptable}
    We evaluate empirical size at the nominal level $0.05$ with $n=100$. As shown in Table~\ref{tab:empirical_size}, KAP-CPD, KAPf-CPD, and GKCP (Gaussian) maintain stable finite-sample type-I error control, with empirical sizes close to the nominal level. All three kernel based methods uses fixed default values for all parameters. We see that they are well calibrated under Assumption~\ref{assumption:exchangeability}. In contrast, NBS and CPDstergm exhibit inflated type-I error in several settings. Since both methods rely on tuning parameters that are difficult to calibrate without ground truth, we use the default settings recommended in the original papers in our experiments.
    
    \subsection{Runtime Comparison}
        We next compare the computational efficiency of the competing methods. To obtain stable measurements, all methods were run on a MacBook Pro with an Apple M1 processor and 8GB RAM with no parallelization, although we note that there could be potential speedups with GPU or parallelization.
        \begin{wraptable}{r}{0.57\textwidth}
        \footnotesize
        \caption{Single Runtime comparison (in seconds) on different sequence sizes. Best performing method is in \textbf{bold}.}
        \centering
        \setlength{\tabcolsep}{2pt}
        \begin{tabular}{cccccc}
        \toprule
        \multicolumn{6}{c}{Apple M1} 
        \\
        \cmidrule(r){1-6} 
         & KAP-CPD & KAPf-CPD & GKCP & NBS & CPDstergm\\
        \cmidrule(r){1-6} 
        $n=100$  & 13.728  & 0.892 & 6.039  & \textbf{0.324} & 36.963  \\
        $n=500$  & 284.401 & \textbf{1.341} & 129.142 & 18.460 & 3286.846 \\
        $n=1000$ & 1087.797 & \textbf{9.605} & 520.512 & 94.325 & - \\
        $n=2000$ & 4554.776 & \textbf{36.229} & 2119.261 & 414.304 & -\\
        \bottomrule
        \end{tabular}
        \label{tab:runtime}
    \end{wraptable}
         For the permutation-based methods, the number of permutations B = 1000. As shown in Table~\ref{tab:runtime}, the proposed fast test can easily scale to long sequences like $n=2000$. Although NBS is computationally efficient for shorter sequences, its runtime increases substantially as $n$ grows. CPDstergm shows limited scalibility in this experiment and was not run for $n=1000$ and above.
        
\section{Real Data Example} \label{sec:real}
    \subsection{Enron Email Networks}\label{subsec:enron}
   The Enron Email Network dataset, accessed through `igraphdata' package \cite{enron-data} records the communication patterns among employees, primarily senior management, of the Enron Corporation between May 1999 and June 2002 \cite{peel2014cpd}. This dataset has been widely studied in network change-point detection, for instance \cite{peel2014cpd,frechet-cpd,kei2025-decoder}.Our goal is to determine whether significant events are reflected in the temporal evolution of the email exchange network. We pre-process the dataset as follows. First, we aggregate the communication network by week. For any pair of nodes, we form an undirected edge if at least one message was exchanged between them during a given week. We then apply binary segmentation to identify multiple change points. The detected change points, along with their nearby real-world events, are summarized in Table \ref{tab:enron-network}, where the event dates are cross-referenced with documented Enron events in \cite{peel2014cpd,enron_events}.Overall, the key events identified, most notably the California blackouts (January 17, 2001) and the stock plunge (November 28, 2001) and bankruptcy filing (December 2, 2001), align strongly with the major change points detected summarized in Table \ref{tab:enron-network}. 

    \begin{table*}[ht]
    \centering
    \caption{The Detected Changed Points and Corresponding \textit{p}-Values of KAP-CPD, GKCP (Gaussian), NBS and Fréchet for the Enron Email Network Dataset. (*** signifies that p-value is smaller than 0.001. Date indicates the week of "year'month/date")}
    \label{tab:enron-network}
    \resizebox{\textwidth}{!}{
    \begin{tabular}{ccccc}
    \toprule
     \textbf{KAP-CPD} & \textbf{GKCP } & \textbf{NBS}  & \textbf{Fréchet} & \textbf{Nearby Events} \\
    \hline
       99'6/28 -- 7/4 & 99'7/19--25 & - & 99'7/12--18&Enron CEO Exempted from  \\
     *** & 0.001 &  & 0.04&Code of Ethics (99'6/28) \\ 
    \hline
       99'12/6 -- 12 &99'11/29 -- 12/5 &  -  & - &\textbf{Launch of `EnronOnline' (99' 11/27)} \\
     ***  & 0.012 &  & \\ 
    \hline
       00'1/10 -- 16 & 99'12/20 -- 26 &99'12/20 -- 26 &  99'12/20 -- 26 &Launch of EBS (00' 1/17)\\
     ***  & *** & &*** & Annual meeting + Stock new high (00' 1/20) \\ 
    \hline
      00'4/10 -- 16 & 00'4/10 -- 16 &  -  & - & Conference Call with Stock Analysts\\
     *** & *** &  & \\ 
    \hline
      00'7/3 -- 9 & 00'7/3 -- 9 &  00'7/3 -- 9  & 00'6/12-18 & EBS-Blockbusters Partnership (00' 7/11)\\
     *** & *** &  & ***\\ 
    \hline
      -& - & - & 00'8/14 -- 20 & Stocks All Time High (00' 8/23)\\
      &  &  &*** \\ 
    \hline
       00'9/25 -- 10/1  & 00'9/25 -- 10/1 & 00'10/9 -- 15  & -&Enron Attornies discuss Belden’s strategies (00'10/3)\\
      0.0038 & *** &  &  \\ 
    \hline
      00'11/6 -- 12 & 00'11/6 -- 12 & -  & -&FERC Exonerates Enron for any \\
     *** & *** &   && wrongdoing in California (00'11/1) \\ 
    \hline
      \textbf{01'1/15 -- 21} & \textbf{01'1/15 -- 21} &  \textbf{01'1/29 -- 2/5}  & -&\textbf{California Major Blackouts (01'1/17)}\\
     *** & *** &  &  \\ 
    \hline
       01'4/9 -- 15 & 01'4/9 -- 15 &  -  & - &\textbf{The Infamous Conference Call (01'4/17)}\\
     0.006 & *** &  && \textbf{with investors} \\ 
    \hline
       01'6/4 -- 10 & 01'6/4 -- 10 & 01'7/2 -- 8   & - &Quarterly Conference Call+Energy Crisis Ends \\
         *** & *** &  &  \\ 
    \hline
       01'7/23 -- 29 & - & -  & - &CEO meets with investors in NY (01' 7/24)\\
     0.005 &  &  &  \\ 
    \hline
       01'9/17 -- 23 & 01'9/3 -- 9 & -  & - & CEO sells \$15.5 million of stock (01'9)\\
      *** & *** &  & \\ 
     \hline
      \textbf{01'11/26 -- 12/2} & \textbf{01'11/26 -- 12/2} & \textbf{01'12/3 -- 9}  & \textbf{01'12/17 -- 23} &\textbf{Stock plunges below \$1 (01'11/28)}\\
     *** & *** &  &***&\textbf{File Bankruptcy (01' 12/2)} \\ 
     \hline
      02'2/4 -- 10 & 02'2/4 -- 10 &  -  & -&\textbf{Cooper takes over as CEO (02'1/30)}\\
      *** & *** &  &&\textbf{Skilling testifies before Congress (02'2/7)}   \\ 
      \hline
      02'3/18 -- 24 & 02'3/25 -- 31 &  -  & -&\textbf{Arthur Andersen (Enron Auditor) Indicted (02'3/14)}\\
     *** & *** &  &  \\ 
    \bottomrule
    \end{tabular}}
    \end{table*}

    \subsection{Seizure Detection in Functional Connectivity Networks}\label{subsec:seizure}
    We apply our method in the database “Detect seizures in intracranial electro-encephalogram (iEEG) recordings” provided originally by the UPenn and Mayo Clinic \cite{seizure-detection}  (\url{https://www.kaggle.com/c/seizure-detection}), which consists of iEEG recordings of 12 subjects (eight human subjects and four canines).
    \begin{wraptable}{r}{0.5\textwidth}
    \centering
    \footnotesize
    \setlength{\tabcolsep}{3pt}
    \renewcommand{\arraystretch}{0.95}
    \caption{Average $|\tau -\hat \tau| (\text{SD})$ for seizure detection data over 30 random seeds. Smaller $|\tau -\hat \tau| (\text{SD})$ are in \textbf{bold}.}
    \label{tab:seizures}
    \begin{tabular}{lllccc}
    \toprule
    Subject & $n$ & $\tau$ & KAPf-CPD &   NBS \\
    \midrule
    Dog 1  & 596  & 178 & \textbf{0.6 (1.28)} &  \textbf{0.6 (1.59)}  \\
    Dog 2  & 1320 & 172 & \textbf{1.07 (2.03)} &  2.27 (3.67)  \\
    Dog 3  & 5240 & 480 & \textbf{0.67 (0.84)} &  0.8 (1.63)  \\
    Dog 4  & 3047 & 257 & 1.77 (2.47) &  \textbf{1.33 (0.92)}  \\
    \midrule
    Patient 1 & 174  & 70  & \textbf{1.87 (2.26)}  & 2.47 (3.05) \\
    Patient 2 & 3141 & 151 & \textbf{0.2 (0.41)}  & 1.47 (1.36) \\
    Patient 3 & 1041 & 327 & \textbf{1.67 (2.04)}  & 2.8 (3.42) \\
    Patient 4 & 210  & 20  & \textbf{0.97 (2.03)}    & 28.8 (48.5)  \\
    Patient 5 & 2745 & 135 & \textbf{0.87 (1.25)}  & 2.4 (3.67) \\
    Patient 6 & 2997 & 225 & \textbf{0.43 (0.68)}  & 1.33 (0.92)  \\
    Patient 7 & 3521 & 282 & \textbf{0.9 (1.03)}  & 1.33 (1.99)  \\
    Patient 8 & 1890 & 180 & \textbf{1.63 (2.67)}  & 3.47 (4.42)\\
    \bottomrule
    \end{tabular}
    \vspace{-8pt}
\end{wraptable}
     We obtained the preprocessed dataset in \cite{cpd_on_graphs-real-data} and \cite{rank-based-cp} who represented the iEEG data as functional connectivity networks calculated from Pearson correlation in the high-gamma band (70-100Hz). Functional connectivity networks are weighted graphs, where electrodes are represented by vertices, and edge weights correspond to the coupling strength of the nodes. Because of the length of the sequence is long, we applied the two relatively faster methods, KAPf-CPD combining Gaussian and Graphlet kernel, and NBS \cite{wang2020optimalchangepointdetection}. Following the experiment setup in \cite{cpd_on_graphs-real-data,rank-based-cp}, to obtain a stationary sequence of graphs from seizure period and normal activity period, we select all graphs of each class and randomly shuffle them within their class, and then concatenate them into $\{G_t\}_{t=1,\dots,n}$, creating an artificial change-point localization benchmark with known change-point location at $\tau$ for each subject. We report the average and standard deviation of the differences between true change-point and estimated change-point $|\hat \tau - \tau|$ over 30 random seeds for shuffling. 
    The performance comparison is given in Table \ref{tab:seizures}. We note that because the change-point location is highly unbalanced, we set the cutoff to be $n_0=0.04n$ and $n_1=0.95n$. KAPf-CPD achieves smaller average localization error for most subjects.


\section{Conclusion} \label{sec:dicussion}
     
In this work, we propose a novel test procedure KAP-CPD which aggregates information from different kernels via a Mahalanobis-type combination. This construction allows the procedure to capture similarity patterns from diverse perspectives, leading to broader adaptivity to different alternatives. To further enhance computational efficiency, we also introduce KAPf-CPD, which provides analytical $p$-values approximation in place of permutation-based $p$-values. In practice, KAPf-CPD is well suited for rapid initial screening of candidate change-points, while KAP-CPD can subsequently be applied for a confirmation of detected changes or in case of ambiguous results. Through extensive simulation studies, we demonstrate that the proposed approach outperforms single-kernel baselines and several state-of-the-art network change-point detection methods in detection power, runtime speed and type-I error control. Together, these findings suggest that kernel aggregation provides a flexible and powerful framework for change-point detection in dynamic networks.

    \bibliographystyle{unsrtnat}
    \bibliography{references}

    \newpage
    \appendix 
    \section{Illustrative example for comparing with other forms of kernel aggregation}
    Most existing kernel-combination methods have been developed for two-sample or independence testing \cite{2023Boosting-kernel-two-sample-test,2023mmdfuse,zhou2025dual-kernel-combination,JMLR:MMD-Agg}. To isolate the effect of the aggregation mechanism, we consider a fixed-split version of our procedure, which reduces the change-point problem to a two-sample testing problem. Specifically, suppose the candidate change point $\tau$ is known. We test
        \[
        H_0: X_1,\ldots,X_n \sim F_0
        \]
        against
        \[
        H_1: X_1,\ldots,X_\tau \sim F_0,
        \qquad
        X_{\tau+1},\ldots,X_n \sim F_1.
        \]
    In this setting, we evaluate the statistic $S(t)$ only at $t=\tau$ and obtain the $p$-value by permutation.
    
    We compare this fixed-split version of KAP-CPD with MMD-FUSE \cite{2023mmdfuse}, a two-sample testing method that aggregates normalized MMD statistics across multiple kernels through a weighted soft maximum. 
    
    We consider sparse mean shifts under heavy-tailed noise. For each setting, we generate
    \[
    X_1,\ldots,X_\tau \sim t_3(0,I_d),
    \qquad
    X_{\tau+1},\ldots,X_n \sim t_3(\mu,I_d),
    \]
    with $\tau=50$ and $n=100$. The mean-shift vector is sparse: $\mu_j=\delta$ for $j\le 10$ and $\mu_j=0$ otherwise. We vary $(d,\delta)$ pair. For each setting, we run 100 replications at level $\alpha=0.05$. For MMD-FUSE, we compare Gaussian (G), Laplacian (L), and Gaussian+Laplacian (G+L) kernel collections, each using 10 bandwidths. For KAP-CPD, we combine Gaussian and Laplacian kernels with the median heuristic. Both methods use $B=2000$ permutations.
    
    Table~\ref{tab:mmd_fuse_comparison} reports rejection counts out of 100 replications. This setting favors the Laplacian kernel. In all three settings, the fixed-split KAP-CPD aggregation benefits from adding the Laplacian kernel and remains close to the stronger single-kernel performance. By contrast, MMD-FUSE(G+L) shows limited improvement over MMD-FUSE(G) in the more difficult settings. This experiment is not intended as a comprehensive comparison with two-sample testing methods; rather, it illustrates that the proposed aggregation strategy can preserve useful information from a favorable kernel in a fixed-split setting.

    \begin{table}[ht]
    \centering
    \caption{Rejection counts out of 100 runs for sparse mean-shift alternatives under different kernel-combination strategies.}
    \label{tab:mmd_fuse_comparison}
    \small
    \setlength{\tabcolsep}{5pt}
    \renewcommand{\arraystretch}{0.95}
    \begin{tabular}{cccccc}
    \toprule
    $d$ & $\delta$ & MMD-FUSE (G) & MMD-FUSE (L) & MMD-FUSE (G+L) & KAP-CPD (G+L) \\
    \midrule
    30 & 0.6 & 85 & 97 & 93 & 97 \\
    30 & 0.5 & 60 & 89 & 74 & 80 \\
    40 & 0.5 & 51 & 83 & 57 & 80 \\
    50 & 0.4 & 18 & 52 & 24 & 39 \\
    50 & 0.5 & 42 & 80 & 55 & 79 \\
    60 & 0.5 & 48 & 76 & 60 & 64 \\
    \bottomrule
    \end{tabular}
    \end{table}

    \section{Lemma1}
        \begin{lemma}\label{lemma:first-two-moments}
        Under the permutation null, for any $a,b\in\{1,2\}$,
        \begin{align*}
        \E[\alpha_a(t)] &= \E[\beta_a(t)]= \bar k_a,\\
        \cov(\alpha_a(t),\alpha_b(t))
        &=
        \frac{2A_{ab}p_1(t)+4B_{ab}p_2(t)+C_{ab}p_3(t)}
             {t^2(t-1)^2}
        -\bar k_a\bar k_b,\\
        \cov(\beta_a(t),\beta_b(t))
        &=
        \frac{2A_{ab}q_1(t)+4B_{ab}q_2(t)+C_{ab}q_3(t)}
             {(n-t)^2(n-t-1)^2}
        -\bar k_a\bar k_b,\\
        \cov(\alpha_a(t),\beta_b(t))
        &=
        \frac{C_{ab}}{n(n-1)(n-2)(n-3)}
        -\bar k_a\bar k_b.
        \end{align*}
        Here
        \begin{align*}
        \bar k_a &= \frac{1}{n(n-1)}\sum_{i=1}^n\sum_{j\ne i} k_{aij}, \quad A_{ab} = \sum_{i=1}^n\sum_{j\ne i} k_{aij}k_{bij},\\
        B_{ab} &= \sum_{i=1}^n\sum_{j\ne i}\sum_{\substack{u=1\\u\ne i,j}}k_{aij}k_{biu}, \quad C_{ab} = \sum_{i=1}^n\sum_{j\ne i}\sum_{\substack{u=1\\u\ne i,j}}
        \sum_{\substack{v=1\\v\ne i,j,u}} k_{aij}k_{buv},
        \end{align*}
        and
        \begin{align*}
        p_1(t) &= \frac{t(t-1)}{n(n-1)},
        & p_2(t) &= p_1(t)\frac{t-2}{n-2},
        & p_3(t) &= p_2(t)\frac{t-3}{n-3},\\
        q_1(t) &= \frac{(n-t)(n-t-1)}{n(n-1)},
        & q_2(t) &= q_1(t)\frac{n-t-2}{n-2},
        & q_3(t) &= q_2(t)\frac{n-t-3}{n-3}.
        \end{align*}
        \end{lemma}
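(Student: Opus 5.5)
Under the permutation null, the observed sequence is relabeled by a uniformly random permutation $\pi$, so $\alpha_a(t)=\frac{1}{t(t-1)}\sum_{i\ne j}k_{aij}\mathbbm{1}\{\pi(i),\pi(j)\le t\}$. Equivalently, we may fix the kernel matrices and let the set $\mathcal{T}$ of indices landing in the first $t$ positions be a uniformly random $t$-subset of $\{1,\dots,n\}$. All moments then reduce to inclusion probabilities of small index sets. For distinct indices $i_1,\dots,i_m$ we have $\pr(i_1,\dots,i_m\in\mathcal{T})=\frac{t(t-1)\cdots(t-m+1)}{n(n-1)\cdots(n-m+1)}$, which equals $p_1(t),p_2(t),p_3(t)$ for $m=2,3,4$. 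For the complement the same holds with $t$ replaced by $n-t$, giving $q_1,q_2,q_3$. The mixed event in which $i,j\in\mathcal{T}$ and distinct $u,v\notin\mathcal{T}$ has probability $\frac{t(t-1)(n-t)(n-t-1)}{n(n-1)(n-2)(n-3)}$.

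\textbf{Step 1 (means).} By linearity, $\E\alpha_a(t)=\frac{1}{t(t-1)}\sum_{i\ne j}k_{aij}p_1(t)=\bar k_a$, and likewise $\E\beta_a(t)=\bar k_a$.

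\textbf{Step 2 (second moments within a group).} Expand
$$\E[\alpha_a\alpha_b]=\frac{1}{t^2(t-1)^2}\sum_{i\ne j}\sum_{u\ne v}k_{aij}k_{buv}\,\pr(i,j,u,v\in\mathcal{T}),$$
and split the ordered pairs $(i,j),(u,v)$ by how many indices they share.
\begin{itemize}
\item If $\{u,v\}=\{i,j\}$, there are two orientations. Symmetry of the kernels gives $2A_{ab}p_1$.
\item If exactly one index is shared, there are four ways to match them. Using symmetry, each reduces to the sum $\sum_{i}\sum_{j\ne i}\sum_{u\ne i,j}k_{aij}k_{biu}=B_{ab}$, giving $4B_{ab}p_2$.
\item If all four indices are distinct, the contribution is $C_{ab}p_3$.
\end{itemize}
Subtracting $\bar k_a\bar k_b$ gives the stated covariance. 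The $\beta$ case is identical with $q_m$ in place of $p_m$.

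\textbf{Step 3 (cross term).} In $\E[\alpha_a\beta_b]$ the index pairs must be disjoint, since one pair lies in $\mathcal{T}$ and the other in its complement. Only the $C_{ab}$-type sum survives, with the mixed probability above. After dividing by $t(t-1)(n-t)(n-t-1)$ this leaves $\frac{C_{ab}}{n(n-1)(n-2)(n-3)}$, and subtracting the means finishes the formula.

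The main obstacle is purely combinatorial bookkeeping in Step 2. One must count the overlap patterns correctly (the multiplicities 2 and 4) and rely on the symmetry $k_{xij}=k_{xji}$, which the plan assumes for both kernel matrices, so that all one-overlap patterns collapse to the single quantity $B_{ab}$. I would check this by enumerating the four matchings $u=i$, $u=j$, $v=i$, $v=j$ explicitly.
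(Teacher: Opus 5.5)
Your proposal is correct and takes essentially the same route as the paper. Both reduce every moment to inclusion probabilities of a uniformly random $t$-subset, split $\E[\alpha_a\alpha_b]$ by the overlap pattern of the two index pairs to obtain $2A_{ab}p_1+4B_{ab}p_2+C_{ab}p_3$, and note that only disjoint pairs contribute to $\E[\alpha_a\beta_b]$. Your explicit use of the symmetry $k_{xij}=k_{xji}$, which collapses the four one-overlap patterns into $B_{ab}$, is a step the paper uses tacitly, so it clarifies the argument rather than departing from it.
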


    \begin{proof}
    Take kernels $K_1,K_2 \in \mathbbm{R}^{n \times n}$, 
    \begin{align*}
        \alpha_1(t)&=\frac{1}{t(t-1)}{\sum_{i=1}^n\sum_{j\neq i}} k_{1ij} \mathbbm{1}(i\le t,j \le t)\\
        \beta_1(t)&= \frac{1}{(n-t)(n-t-1)}{\sum_{i=1}^n\sum_{j\neq i}} k_{1ij} \mathbbm{1}(i> t,j > t)\\
        \text{Under } &\text{the permutation null, without loss of generallity, take } a=1,b=2: \\
        \E[\alpha_1(t)]&= \E\left[\frac{1}{t(t-1)}{\sum_{u=1}^n\sum_{v\neq u}} k_{1ij} \mathbbm{1}(i\le t,j \le t)\right]\\
        &= \frac{1}{t(t-1)}{\sum_{i=1}^n\sum_{j\neq i}} k_{1ij} \E[\mathbbm{1}(i\le t,j \le t)]\\
        &=\frac{1}{t(t-1)}{\sum_{i=1}^n\sum_{j\neq i}} k_{1ij} \frac{t(t-1)}{n(n-1)}=\bar k_1\\
        \E[\beta_2(t)]&= \E\left[\frac{1}{(n-t)(n-t-1)}{\sum_{u=1}^n\sum_{v\neq u}} k_{2ij} \mathbbm{1}(i\le t,j \le t)\right]\\
        &= \frac{1}{(n-t)(n-t-1)}{\sum_{i=1}^n\sum_{j\neq i}} k_{2ij} \E[\mathbbm{1}(i\le t,j \le t)]\\
        &=\frac{1}{(n-t)(n-t-1)}{\sum_{i=1}^n\sum_{j\neq i}} k_{2ij} \frac{(n-t)(n-t-1)}{n(n-1)}=\bar k_2\\
        \E[\alpha_1(t)\beta_2(t)] &= \E\left[
        \frac{1}{t(t-1)}
        \sum_{i=1}^n \sum_{j\neq i} k_{1ij}\mathbbm{1}(i\le t,j\le t)
        \right. \notag\\
        &\qquad\left.
        \times
        \frac{1}{(n-t)(n-t-1)}
        \sum_{u=1}^n \sum_{v\neq u} k_{2uv}\mathbbm{1}(u>t,v>t)
        \right]\\
        &= \frac{\sum_i\sum_{j\neq i}\sum_{u\neq i\neq j}\sum_{v\neq u\neq i\neq j} k_{1ij}k_{2uv}P(i\le t, j \le t, u> t, v>t)}{t(t-1)(n-t)(n-t-1)}\\
        &= \frac{\sum_i\sum_{j\neq i}\sum_{u\neq i\neq j}\sum_{v\neq u\neq i\neq j} k_{1ij}k_{2uv}\frac{t(t-1)(n-t)(n-t-1)}{n(n-1)(n-2)(n-3)}}{t(t-1)(n-t)(n-t-1)} \\
        &=\frac{ \sum_{i=1}^n\sum_{j\neq i}\sum_{u\neq i\neq j}\sum_{v\neq u\neq i\neq j} k_{1ij}k_{2uv}}{n(n-1)(n-2)(n-3)}\\
        &= \frac{C_{12}}{n(n-1)(n-2)(n-3)}\\
        \E[\alpha_1(t)\alpha_2(t)]&= \E\left[\frac{1}{t(t-1)}{\sum_{u=1}^n\sum_{v\neq u}} k_{1ij} \mathbbm{1}(i\le t,j \le t) \times \frac{1}{t(t-1)}{\sum_{u=1}^n\sum_{v\neq u}} k_{2uv} \mathbbm{1}(u\le t,v \le t)\right]\\
        &= \frac{1}{t^2(t-1)^2}\sum_i\sum_{j\neq i}\sum_{u}\sum_{v\neq u} k_{1ij}k_{2uv}P(i\le t, j \le t, u \le t, v \le t)\\
        &= \frac{2}{t^2(t-1)^2}\sum_i\sum_{j\neq i}k_{1ij}k_{2ij}\frac{t(t-1)}{n(n-1)}\\
        &+ \frac{4}{t^2(t-1)^2}\sum_i\sum_{j\neq i}\sum_{u\neq i\neq j} k_{1ij}k_{2iu}\frac{t(t-1)(t-2)}{n(n-1)(n-2)}\\
        &+\frac{1}{t^2(t-1)^2}\sum_i\sum_{j\neq i}\sum_{u\neq i\neq j}\sum_{v \neq i \neq j \neq u} k_{1ij}k_{2uv}\frac{t(t-1)(t-2)(t-3)}{n(n-1)(n-2)(n-3)}\\
        &=\frac{2A_{12}p_1(t)+4B_{12}p_2(t)+C_{12}p_3(t)}
             {t^2(t-1)^2}
     \end{align*}
     $\E[\beta_1(t)\beta_2(t)]$ can be obtained similarly. We can obtain corresponding quantities by taking $\cov(\alpha_1(t),\alpha_2(t))=\E[\alpha_1(t)\alpha_2(t)]-\E[\alpha_1(t)]\E[\alpha_2(t)]$.
    \end{proof}

    \section{Proof of Theorem \ref{thm:decomposition}}\label{appendix:proof_thm1}
    \begin{proof}
        Through tedious algebra by plugging in the values in Lemma ~\ref{lemma:first-two-moments}, we can show that:
        \begin{align*}
        \Cov(W_1(t)-W_2(t),D_1(t)-D_2(t)) &= \Cov(W_1(t)-W_2(t),D_1(t)+D_2(t))\\
        &=\Cov(W_1(t)+W_2(t),D_1(t)-D_2(t))\\
        &=\Cov(W_1(t)+W_2(t),D_1(t)+D_2(t))=0.
        \end{align*}
        By verifying that:
        $$\cov(W_1(t),D_1(t))=\cov(W_1(t),D_2(t))=\cov(W_2(t),D_1(t))=\cov(W_2(t),D_2(t))=0.$$
        
        Next we want to show $\exists c_1,c_0 \quad s.t \quad \cov(c_1W_1(t) + c_0W_2(t), W_1(t)-W_2(t))=0.$
         \begin{align*}
         \quad \Cov(c_1W_1(t) + c_0W_2(t), W1(t)-W_2(t)) &= c_1\Var(W_1(t)) -c_1\Cov(W_1(t),W_2(t)) \\
         &+ c_0 \Cov(W_1(t),W_2(t)) - c_0\Var(W_2(t))=0\\
         \text{So we want to find } c_1,c_0 \quad  s.t \quad c_1 \Var(W_1(t)) &- c_0\Var(W_2(t)) = (c_1-c_0) \Cov(W_1(t), W_2(t))
         \end{align*}
         Then take $c_0=1$, we have:
         $$c_1(t)=\frac{\Var(W_2(t))-\Cov(W_1(t),W_2(t))}{\Var(W_1(t))-\Cov(W_1(t),W_2(t))}$$

         Similarly we can show that with $c_2(t)=\frac{\Var(D_2(t))-\Cov(D_1(t),D_2(t))}{\Var(D_1(t))-\Cov(D_1(t),D_2(t))}$, $\cov(c_2D_1(t) + D_2(t), D_1(t)-D_2(t))=0.$
         
         \begin{align*}
            \vec U
            &=
            \begin{bmatrix}
            W_1(t)-W_2(t) \\
            D_1(t)-D_2(t) \\
            c_1(t) W_1(t)+W_2(t) \\
            c_2(t) D_1(t)+D_2(t)
            \end{bmatrix}
            = A_t B_t \vec a_t
            = V_t \vec a_t,
        \end{align*}
        Where
        \begin{align*}
            A_t &=
            \begin{bmatrix}
            1 & -1 & 0 & 0 \\
            0 & 0  & 1 & -1\\
            c_1(t) & 1 & 0 & 0 \\
            0 & 0 & c_2(t) & 1
            \end{bmatrix},
            \vec a_t =
            \begin{bmatrix}
            \alpha_1(t) \\
            \beta_1(t) \\
            \alpha_2(t) \\
            \beta_2(t)
            \end{bmatrix},\\
             B_t &=
            \begin{bmatrix}
            \frac{t}{n} & \frac{n-t}{n} & 0 & 0 \\
            0 & 0  & \frac{t}{n} & \frac{n-t}{n}\\
            \frac{t(t-1)}{n(n-1)} & -\frac{(n-t)(n-t-1)}{n(n-1)} & 0 & 0 \\
            0 & 0 & \frac{t(t-1)}{n(n-1)} & -\frac{(n-t)(n-t-1)}{n(n-1)}
            \end{bmatrix}. \\
        \end{align*}
        
        We see that $S(t)$ = $(\vec a_t-\E[\vec a_t])^T \Sigma_t^{-1}(\vec a_t-\E[\vec a_t])= (V(\vec a_t-\E[\vec a_t]))^T (V\Sigma_tV^T)^{-1}V(\vec a_t-\E[\vec a_t])$.
        And since:
        \begin{align*}
        &\Cov(W_1(t)-W_2(t),D_1(t)-D_2(t))= \Cov(W_1(t)-W_2(t),D_1(t)+D_2(t))\\
        &=\Cov(W_1(t)+W_2(t),D_1(t)-D_2(t))=\Cov(W_1(t)+W_2(t),D_1(t)+D_2(t))\\
        &=\Cov(c_1 W_1(t)+W_2(t),W_1(t)-W_2(t))=\Cov(c_2 D_1(t)+D_2(t),D_1(t)-D_2(t))\\
        &=0
        \end{align*}
        $(V\Sigma_tV^T)^{-1}$ is a $4\times 4$ diagonal matrix. Thus, $$S(t)= Z_{W_{\text{diff}}}^2(t)+Z_{D_{\text{diff}}}^2(t)+Z_{W_{\text{sum}}}^2(t)+Z_{D_{\text{sum}}}^2(t).$$
    \end{proof}
    
    \section{Proof of Theorem \ref{thm2}}
        Consider the vector:
        \begin{align*}
            \vec u &= \begin{bmatrix}
            W_1(t) - W_2(t) \\
            D_1(t) - D_2 (t)\\
            W_1(t) + W_2 (t)\\
            D_1 (t)+ D_2(t)
            \end{bmatrix} = \begin{bmatrix}
            1 & -1 & 0 & 0 \\
            0 & 0  & 1 & -1\\
            1 & 1 & 0 & 0 \\
            0 & 0 & 1 & 1
            \end{bmatrix} \\
            &\times \begin{bmatrix}
            \frac{t}{n} & \frac{n-t}{n} & 0 & 0 \\
            0 & 0  & \frac{t}{n} & \frac{n-t}{n}\\
            \frac{t(t-1)}{n(n-1)} & -\frac{(n-t)(n-t-1)}{n(n-1)} & 0 & 0 \\
            0 & 0 & \frac{t(t-1)}{n(n-1)} & -\frac{(n-t)(n-t-1)}{n(n-1)}
            \end{bmatrix} \times \begin{bmatrix}
            \alpha_1 (t)\\
            \beta_1(t)\\
            \alpha_2(t) \\
            \beta_2(t)
            \end{bmatrix}.
        \end{align*}

        Let
        \[
        \begin{aligned}
        a_W &= \Var(W_1(t)-W_2(t)), 
        & a_D &= \Var(D_1(t)-D_2(t)),\\
        b_W &= \Var(W_1(t))-\Var(W_2(t)), 
        & b_D &= \Var(D_1(t))-\Var(D_2(t)),\\
        c_W &= \Var(W_1(t)+W_2(t)), 
        & c_D &= \Var(D_1(t)+D_2(t)).
        \end{aligned}
        \]
        Then
        \[
        \Sigma_{\vec u}
        =
        \begin{bmatrix}
        a_W & 0 & b_W & 0 \\
        0 & a_D & 0 & b_D \\
        b_W & 0 & c_W & 0 \\
        0 & b_D & 0 & c_D
        \end{bmatrix}
        =
        \begin{bmatrix}
        A & B\\
        B & C
        \end{bmatrix},
        \]
        where
        \[
        A=\operatorname{diag}(a_W,a_D),\qquad
        B=\operatorname{diag}(b_W,b_D),\qquad
        C=\operatorname{diag}(c_W,c_D).
        \]
        
        $$\Sigma_{\vec u}= \begin{bmatrix}
        I & BC^{-1} \\
        0 & I \\
        \end{bmatrix} \times \begin{bmatrix}
        A-BC^{-1}B^T & 0 \\
        0 & C \\
        \end{bmatrix} \times \begin{bmatrix}
        I & 0 \\
        C^{-1}B^T & I \\
        \end{bmatrix}$$
        
        $$\det(\Sigma_{\vec u})=\det(C)\det(A-BC^{-1}B^T)$$
        
        In order for $\Sigma_{\vec u}$ to be invertible, we need C is invertible and $A-BC^{-1}B^T$ is invertible.
        
        $$C= \begin{bmatrix}
         c_W & 0  \\
         0 &c_D
        \end{bmatrix}$$
        
        $$A-BC^{-1}B^T= \begin{bmatrix}
        a_D - \frac{b_D^2}{c_D} & 0  \\
         0 & a_W - \frac{b_W^2}{c_W}
        \end{bmatrix}$$
        
        In other words we need:
        \begin{align}
         &1. \quad \Var(D_1(t)-D_2(t)) - \frac{(\Var(D_1(t))-\Var(D_2(t)))^2}{\Var(D_1(t)+D_2(t))} \neq 0 \label{cond:invert}\\
         &2. \quad \Var(W_1(t)-W_2(t)) - \frac{(\Var(W_1(t))-\Var(W_2(t)))^2}{\Var(W_1(t)+W_2(t))} \neq 0 \label{cond:invert2}\\
         &3. \quad \Var(D_1(t) + D_2(t)) > 0\\
         &4. \quad \Var(W_1(t) + W_2(t)) > 0 
        \end{align}
        
        \begin{align*}
         &\Var(D_1(t)-D_2(t)) - \frac{(\Var(D_1(t))-\Var(D_2(t)))^2}{\Var(D_1(t)+D_2(t))} \\
        & =  \frac{\Var(D_1(t)-D_2(t))\Var(D_1(t)+D_2(t))-(\Var(D_1(t))-\Var(D_2(t)))^2}{\Var(D_1(t)+D_2(t))}\\
         \end{align*}
        Equivalently we need: 
        \begin{equation}
            \Var(D_1(t)-D_2(t))\Var(D_1(t)+D_2(t))-(\Var(D_1(t))-\Var(D_2(t)))^2>0
        \end{equation}
         \begin{align*}
         \Var(D_1(t)-D_2(t))\Var(D_1(t)+D_2(t))
         &= \Var(D_1(t))^2 + 2(\Var(D_1(t))\Var(D_2(t))) \\
         &+ \Var(D_2(t))^2 - 4\Cov(D_1(t), D_2(t))^2\\
          (\Var(D_1(t))-\Var(D_2(t)))^2 &= \Var(D_1(t))^2 - 2\Var(D_1(t))\Var(D_2(t)) + \Var(D_2(t))^2
         \end{align*}
        
         So in order to satisfy \eqref{cond:invert} we need:
        $$\Var(D_1(t))\Var(D_2(t))-\Cov(D_1(t),D_2(t))^2 \neq 0 \rightarrow D_1(t), D_2(t) \text{ not perfectly linearly correlated.}$$
        
         Similarly for \eqref{cond:invert2}, we need: $$ \Var(W_1(t))\Var(W_2(t))-\Cov(W_1(t),W_2(t))^2 \neq 0 , \text{ or equivalently } W_1(t), W_2(t)$$  not perfectly linearly correlated.
         So the conditions are:
         \begin{align*}
         &1. \quad \Var(D_1(t))\Var(D_2(t))-\Cov(D_1(t),D_2(t))^2 \neq 0\\
         &2. \quad \Var(W_1(t))\Var(W_2(t))-\Cov(W_1(t),W_2(t))^2 \neq 0\\
         &3. \quad \Var(D_1(t) + D_2(t)) \neq 0\\
         &4. \quad \Var(W_1(t) + W_2(t)) \neq 0 \\
        \end{align*}

\section{Asymptotic p-value approximation and limiting distributions details}
    \subsection{limiting distribution details} \label{appendix:limiting-dist-details}
        Let $\tilde k_{ij}=k_{ij}-\bar k$ and $k_{i.}=\sum_{j=1,\ j\neq i}^n k_{ij}$. 
        According to Theorem~1 in \cite{kerseg}, if $K_1$ and $K_2$ each satisfy: 1. $\sum_{i=1}^n |\tilde k_{i.}|^s
            =
            o\left[
            \left\{\sum_{i=1}^n \tilde k_{i.}^2\right\}^{s/2}
            \right] \text{for all integers } s>2,
            $ and
            2. $
            \sum_{i,j=1}^n \tilde k_{ij}^2
            =
            o\left(
            \sum_{i=1}^n \tilde k_{i.}^2
            \right)
            $,
    we will have that each of $\{Z_{D_1}[nu]: 0<u<1\},\{Z_{D_2}[nu]: 0<u<1\}$
    converges to a Gaussian process in finite dimensional distributions, denoted as $\{Z^*_{D_x}[nu]: 0<u<1\}$. 
    We also have that each of $\{Z_{W_{x,r}}[nu]: 0<u<1\}$ converges to a Gaussian process in finite dimensional distributions when $r \neq 1$. In the case of Gaussian kernel and graphlet kernel with each $k_{ij} \in [0,1]$, since each $k_{ij}$ is of constant order, unless under unusual circumstances with significant outliers such as one observation $G_t$ dominates the row sums $\tilde k_{i.}$ through $\tilde k_{t.}$, we will have $|\tilde k_{ij}| =  |k_{ij}-\bar k|= O(1)$ and $|\tilde k_{i.}| = O(n)$ $\forall i$, which satisfy both conditions. 
   
    \subsection{P-value approximation details} \label{appendix:p-val-approx}
    According to \cite{chen2015graph}, when $n_{0}, n_{1}, n, b \rightarrow \infty$ in a way such that for some $0 < x_{0} < x_{1} < 1$ and $b_{0} > 0$, $n_{0}/n \rightarrow x_{0}$, $n_{1}/n \rightarrow x_{1}$, and $b/\sqrt{n} \rightarrow b_{0}$, as $n\rightarrow \infty$, we have:     
    \begin{align}
    	&\pr\left(\max_{n_{0}\leq t \leq n_{1}}|Z_{D_{x}}^{*}(t/n)| > b\right)  \sim 2b\phi(b)\int_{x_{0}}^{x_{1}}h_{D_{x}}^{*}(x)\nu\Big(b_{0}\sqrt{2h_{D_{x}}^{*}(x)}\Big)dx, \label{pvalapprox1}\\ 
        &\pr\left(\max_{n_{0}\leq t \leq n_{1}}Z_{W_{x,r}}^{*}(t/n) > b\right)  \sim b\phi(b)\int_{x_{0}}^{x_{1}}h_{W_{x,r}}^{*}(x)\nu\Big(b_{0}\sqrt{2h_{W_{x,r}}^{*}(x)}\Big)dx, \label{pvalapproxw}
    \end{align}
    where the function $\nu(\cdot)$ can be numerically estimated as $\nu(s) \approx \frac{(2/s)\left(\Phi(s/2)-0.5\right)}{(s/2)\Phi(s/2)+\phi(s/2)}$ according to \cite{siegmund2007statistics} with $\Phi(\cdot)$ and $\phi(\cdot)$ being the standard normal cumulative density function and probability density function, respectively. 
    \begin{align*}
    	h_{D_x}^{*}(x) &= \lim_{s\nearrow x}\frac{\partial\rho_{D_x}^{*}(s,x)}{\partial s}  = -\lim_{s\searrow x}\frac{\partial\rho_{D_x}^{*}(s,x)}{\partial s}.
    \end{align*}
    
    \begin{remark}
    	In practice, when using ~\eqref{pvalapprox1}-~\eqref{pvalapproxw} for finite sample, we use
    	\begin{align*}
    		&\pr\left(\max_{n_{0}\leq t \leq n_{1}}|Z_{D_x}(t)| > b\right)  \sim 2b\phi(b)\sum_{n_{0}\le t \le n_{1}}C_{D_x}(t)\nu\Big(b\sqrt{2C_{D_x}(t)}\Big),\\
            &\pr\left(\max_{n_{0}\leq t \leq n_{1}}Z_{W_{r,x}}(t) > b\right)  \sim 2b\phi(b)\sum_{n_{0}\le t \le n_{1}}C_{W_{r,x}}(t)\nu\Big(b\sqrt{2C_{W_{r,x}}(t)}\Big),\\
    	\end{align*}
    	where
    	\begin{align*}
    		C_{D_x}(t) = \frac{\partial\rho_{D_x}(s,t)}{\partial s}\Bigr|_{s=t}, \ \ \ C_{W,r}(t) = \frac{\partial\rho_{W,r}(s,t)}{\partial s}\Bigr|_{s=t}
    	\end{align*}
    	with $\rho_{D_x}(u,v) = \cov\left(Z_{D_x}(u), Z_{D_x}(v)\right)$ and $\rho_{W,r}(u,v) = \cov\left(Z_{W,r}(u), Z_{W,r}(v)\right)$. The explicit expressions for $C_{D_x}(t)$ and $C_{W,r}(t)$ can be obtained through combinatorial analysis.
    \end{remark}

    \subsection{Checking Type I Error Control Under Finite n} \label{subsec:check}
            To assess the accuracy of the proposed $p$-value approximation under finite n, we compared the critical values obtained from the permutation procedure with those derived from the analytical approximation and examined the discrepancy between them. 'Ana' represent the analytical critical value  and 'Per' represents critical values obtained through 1000 permutation. Table \ref{Z_D:permute-ana-critical-value} shows the comparison for $Z_{D_1}, Z_{D_2}$ for gaussian and graphlet kernel respectively. We can see that the $p$-value approximations for both kernels are quite accurate. Table \ref{Z_W0.5:permute-ana-critical-value} and Table \ref{Z_W2:permute-ana-critical-value} shows the analytical and permutation critical values for $Z_W$ for $r=0.5$ and $r=2$ respectively. For $Z_W$, we see that the convergence to gaussian process could be slower, especially when $n_0$ is closer to the end, leading to a larger gap between analytical and permutation critical values.
    
      \begin{table}[h!]\
    	\centering 
        \small
        \setlength{\tabcolsep}{3pt}
    	\caption{Critical values for the single change-point scan statistic $\max_{n_0 \le t \le n_1} Z_{D_x}(t),x\in \{1,2\}$ at 0.05 significance level}
        \label{Z_D:permute-ana-critical-value}
        \begin{minipage}[t]{0.48\textwidth}
    	\begin{tabular}{ccccccc} 
    		\hline  
    		 \multicolumn{1}{c}{ER $p = 0.2 $} &\multicolumn{2}{c}{$n_0 = 100$}  &  \multicolumn{2}{c}{$n_0 = 25$} &  \multicolumn{2}{c}{$n_0 = 10$} \\ 
    		\hline
    		Kernel & Ana  &Per & Ana  &Per & Ana  &Per \\
    		\hline 
    		Gaussian  & & &&&&\\
    		$n = 500$ &2.81& 2.81 & 3.07 & 3.06  & 3.16 & 3.15\\ 
            \hline
    		Graphlet    &  &  &  &  &   \\
    		$n = 500$ &2.81& 2.80 & 3.07 & 3.08  & 3.16 & 3.15 \\
    		\hline 
    	\end{tabular} 
        \end{minipage}\hfill
        \begin{minipage}[t]{0.48\textwidth}
        \begin{tabular}{ccccccc} 
    		\hline  
            \multicolumn{1}{c}{$\quad\,$ SBM $\quad$} &\multicolumn{2}{c}{$n_0 = 100$}  &  \multicolumn{2}{c}{$n_0 = 25$} &  \multicolumn{2}{c}{$n_0 = 10$}   \\ 
    		\hline
    		Kernel & Ana  &Per & Ana  &Per & Ana  &Per  \\
    		\hline 
    		Gaussian  & & &&&&\\
    		$n = 500$ &2.82& 2.81 &3.08 & 3.06  & 3.16 & 3.12  \\ 
            \hline
    		Graphlet    &  &  &  &  &   \\
    		$n = 500$ &2.82& 2.84 &3.08 & 3.08  & 3.16 & 3.15 \\
    		\hline 
    	\end{tabular} 
        \end{minipage}
    \end{table} 
        \begin{table}[h]
        \centering
        \small
        \setlength{\tabcolsep}{3pt}
        \caption{Critical values for the single change-point scan statistic
        $\max_{n_0 \le t \le n_1} Z_{W,0.5}(t)$ at significance level $0.05$.}
        \label{Z_W0.5:permute-ana-critical-value}
        \begin{minipage}[t]{0.48\textwidth}
        \centering
        \begin{tabular}{ccccccc}
        \hline
        \multicolumn{1}{c}{ER, $p=0.2$}
        & \multicolumn{2}{c}{$n_0=100$}
        & \multicolumn{2}{c}{$n_0=25$}
        & \multicolumn{2}{c}{$n_0=10$}\\
        \hline
        Kernel & Ana & Per & Ana & Per & Ana & Per  \\
        \hline
        Gaussian &&&&&& \\
        $n=500$ & 2.53 & 2.57 & 2.83 & 2.87 & 2.92 & 2.96 \\
        \hline
        Graphlet &&&&&& \\
        $n=500$ & 2.52 & 2.58 & 2.82 & 2.89 & 2.91 & 3.00 \\
        \hline
        \end{tabular}
        \end{minipage}\hfill
        \begin{minipage}[t]{0.48\textwidth}
        \centering
        \begin{tabular}{ccccccc}
        \hline
        \multicolumn{1}{c}{SBM}
        & \multicolumn{2}{c}{$n_0=100$}
        & \multicolumn{2}{c}{$n_0=25$}
        & \multicolumn{2}{c}{$n_0=10$}\\
        \hline
        Kernel & Ana & Per & Ana & Per & Ana & Per \\
        \hline
        Gaussian &&&&&& \\
        $n=500$ & 2.56 & 2.62 & 2.87 & 2.89 & 2.97 & 2.98  \\
        \hline
        Graphlet &&&&&& \\
        $n=500$ & 2.52 & 2.65 & 2.82 & 2.98 & 2.91 & 3.11 \\
        \hline
        \end{tabular}
        \end{minipage}
        \end{table}
    \begin{table}[h!]
    	\centering 
        \small
        \setlength{\tabcolsep}{3pt}
    	\caption{Critical values for the single change-point scan statistic $\max_{n_0 \le t \le n_1} Z_{W,2}(t)$ at 0.05 significance level}
    	\label{Z_W2:permute-ana-critical-value}
        \begin{minipage}[t]{0.48\textwidth}
    	\begin{tabular}{ccccccc} 
    		\hline  
    		 \multicolumn{1}{c}{ER $p=0.2 $} &  \multicolumn{2}{c}{$n_0 = 100$} & \multicolumn{2}{c}{$n_0 = 25$} &  \multicolumn{2}{c}{$n_0 = 10$} \\ 
    		\hline
    		Kernel & Ana &Per  & Ana  &Per & Ana  &Per \\
    		\hline 
    		Gaussian  & & &&&\\
    		$n = 500$  &2.53& 2.58 & 2.83 & 2.88 & 2.92 & 2.94 \\ 
            \hline
    		Graphlet   &  &  &  &  &   \\
    		$n = 500$  &2.52&2.59 & 2.82 & 2.90 & 2.91 & 3.00 \\
    		\hline 
    	\end{tabular} 
        \end{minipage}
        \hfill
        \begin{minipage}[t]{0.48\textwidth}
        \begin{tabular}{ccccccc} 
    		\hline  
    		 \multicolumn{1}{c}{$\quad\,$ SBM $\quad$} &\multicolumn{2}{c}{$n_0 = 100$}  &  \multicolumn{2}{c}{$n_0 = 25$} &  \multicolumn{2}{c}{$n_0 = 10$} \\ 
    		\hline
    		Kernel & Ana  &Per & Ana  &Per & Ana  &Per  \\
    		\hline 
    		Gaussian  & & &&&&\\
    		$n = 500$ &2.56& 2.60 &2.87  & 2.90  & 2.97 & 3.03 \\ 
            \hline
    		Graphlet    &  &  &  &  &  \\
    		$n = 500$ &2.52& 2.65 & 2.82 & 2.97  & 2.91 & 3.13 \\
    		\hline 
    	\end{tabular} 
        \end{minipage}
    \end{table} 
    
    \section{Additional Experiments Details for Section \ref{sec:performance}} \label{appendix:experiments-details}

    We specify additional details for all methods implemented in Section \ref{sec:performance}. All experiments were run on a Linux CPU server with dual Intel Xeon E5-2699 v3 processors at $2.30$GHz.
    
    \textbf{Details on metrics: } For NBS and CPDstergm, the method is defaulted to produce multiple change-points. Under the single change-point alternative, in case where multiple change-points were produced, we classify the detection as accurate as long as there exists a $\hat \tau$ in the list of estimated change-points such that $|\hat \tau -\tau|\le 0.05n$.

    \textbf{Details on parameters: } For CPDstergm, we set the range of $\lambda$s to search from to be: $lambda = (10^{-2}, 10^4)$, and set the threshold alpha at 0.05. For NBS, we set the separation: $\Delta = 10$, and set the threshold at the default recommended level given at $N \hat \rho (\log(\tau))^2 / 20$, where $\hat \rho$ is sparsity parameter estimated by the 0.95 quantile of the estimated connectivity probability of each node. For all kernel-based methods, we set $K_1$ to be the gaussian kernel with median heuristic, and $K_2$ to be graphlet kernel with subgraph size of 3. $n_0,n_1$ at default values. Additionally for KAPf-CPD, we set $r_1=0.5$, $r_2=2$.
    
    \section{Additional Experiments Details for Section \ref{subsec:enron}}
        For binary-segmentation for KAP-CPD,GKCP, we set the minimum separation between two change-points at 6. For NBS, we set the separation: $\Delta = 6$, and set the threshold slightly higher than default level at $N \hat \rho (\log(\tau))^2 / 10$, where $\hat \rho$ is sparsity parameter estimated by the 0.95 quantile of the estimated connectivity probability of each node. For Fréchet, the original paper \cite{frechet-cpd} also analyzed the same dataset, so we directly reported their estimated change-points as documented in their paper.

    \section{Additional Experiments Details for Section \ref{subsec:seizure}}
        Since the networks in this dataset is weighted, and graphlet kernel only works on unweighted graphs. We did the following additional preprocessing to construct the graphlet kernel. We convert any weights $\ge 0.5$ or $\le -0.5$ to an edge, and the rest are not connected. Specific parameters setup is the same as in Appendix \ref{appendix:experiments-details}.

\end{document}